  \providecommand\BibTeX{{%
    \normalfont B\kern-0.5em{\scshape i\kern-0.25em b}\kern-0.8em\TeX}}}
\theoremstyle{thmstyleone}%
\newtheorem{theorem}{Theorem}%  meant for continuous numbers
\theoremstyle{thmstyletwo}%
\theoremstyle{thmstylethree}%
\newtheorem{definition}{Definition}%
\begin{document}

%%
%% The "title" command has an optional parameter,
%% allowing the author to define a "short title" to be used in page headers.
\title{A Local-Ratio-Based Power Control Approach for Capacitated Access Points in Mobile Edge Computing}

%%
%% The "author" command and its associated commands are used to define
%% the authors and their affiliations.
%% Of note is the shared affiliation of the first two authors, and the
%% "authornote" and "authornotemark" commands
%% used to denote shared contribution to the research.
%\author{Qinghui Zhang}
%\authornote{Both authors contributed equally to this research.}
%\email{zhangqinghui@mail.ynu.edu.cn}
%\orcid{1234-5678-9012}
\author{Qinghui Zhang}
\affiliation{%
  \institution{Yunnan University}
  \city{Kunming}
  \country{China}}
\email{zhangqinghui@mail.ynu.edu.cn}

% 0000-0002-6591-0916 xjz
\author{Weidong Li}
\affiliation{%
  \institution{Yunnan University}
  \city{Kunming}
  \country{China}}
\email{weidong@ynu.edu.cn}

\author{Qian Su}
\affiliation{%
  \institution{Yunnan University}
  \city{Kunming}
  \country{China}}
\email{suqian@ynu.edu.cn}

\author{Xuejie Zhang}
\authornotemark[1]
\affiliation{%
  \institution{Yunnan University}
  \city{Kunming}
  \country{China}}
\email{xjzhang@ynu.edu.cn}

%%
%% The abstract is a short summary of the work to be presented in the
%% article.
\begin{abstract}
  Terminal devices (TDs) connect to networks through access points (APs) integrated into the edge server. This provides a prerequisite for TDs to upload tasks to cloud data centers or offload them to edge servers for execution. In this process, signal coverage, data transmission, and task execution consume energy, and the energy consumption of signal coverage increases sharply as the radius increases. Lower power leads to less energy consumption in a given time segment. Thus, power control for APs is essential for reducing energy consumption. Our objective is to determine the power assignment for each AP with same capacity constraints such that all TDs are covered, and the total power is minimized. We define this problem as a \emph{minimum power capacitated cover } (MPCC) problem and present a \emph{minimum local ratio} (MLR) power control approach for this problem to obtain accurate results in polynomial time. Power assignments are chosen in a sequence of rounds. In each round, we choose the power assignment that minimizes the ratio of its power to the number of currently uncovered TDs it contains. In the event of a tie, we pick an arbitrary power assignment that achieves the minimum ratio. We continue choosing power assignments until all TDs are covered. Finally, various experiments verify that this method can outperform another greedy-based way.
\end{abstract}

%%
%% The code below is generated by the tool at http://dl.acm.org/ccs.cfm.
%% Please copy and paste the code instead of the example below.
%%
\begin{CCSXML}
<ccs2012>
   <concept>
       <concept_id>10003033.10003106.10003119</concept_id>
       <concept_desc>Networks~Wireless access networks</concept_desc>
       <concept_significance>500</concept_significance>
       </concept>
   <concept>
       <concept_id>10003752.10003809</concept_id>
       <concept_desc>Theory of computation~Design and analysis of algorithms</concept_desc>
       <concept_significance>300</concept_significance>
       </concept>
 </ccs2012>
\end{CCSXML}

\ccsdesc[500]{Networks~Wireless access networks}
\ccsdesc[300]{Theory of computation~Design and analysis of algorithms}

%%
%% Keywords. The author(s) should pick words that accurately describe
%% the work being presented. Separate the keywords with commas.
\keywords{Mobile Edge Computing, Minimum Power Cover, Local Ratio, Power Control}

%% A "teaser" image appears between the author and affiliation
%% information and the body of the document, and typically spans the
%% page.

%%
%% This command processes the author and affiliation and title
%% information and builds the first part of the formatted document.
\maketitle

\section{Introduction}\label{sec1}
With the rapid development of beyond 5G/6G and the Internet of Things, increasing number of terminal devices (TD) are being deployed at the edge of networks \cite{Zhang2022}. Even with advances in network technology, data centers cannot guarantee acceptable latency and network unobstructed. By placing the computing and storage resources closer to the TDs, mobile edge computing (MEC) can significantly increase performance in terms of low latency, reduced communications overhead, and high-quality user experience\cite{Abbas2018}.

Nowadays, many tasks of TDs will be prioritized to migrate to servers close to TDs for computing or storage, unless these tasks can only be handled in cloud data centers. By the way, a TD usually establishes network connection with edge server through its access point (AP, it can be an integral part of the edge server itself). This provides a prerequisite for TDs to upload tasks to cloud data centers or offload them to edge servers for execution. During these processes, signal coverage, data transmission, task execution, etc., cost energy. Therefore, in recent years, considerable research on reducing energy consumption has been conducted, and power control is an important topic.

In typical wireless networks, the signal coverage of an AP, which is determined by its power, is a disk area centered on it. Because the signal intensity will attenuate with distance, a larger signal coverage disk needs more power\cite{Ronnow2019}. Many power control methods have been proposed for different wireless networks. About power control of cellular network, \cite{Trinh2020} considered the sum spectral efficiency (SE) optimization problem in multi-cell Massive MIMO systems with a varying number of active users. The neural network they proposed, PowerNet, only uses the large-scale fading information to predict both the pilot and data powers. In \cite{Dai2021}, Dai \emph{et al.} investigated the joint optimization of BS clustering and power control for NOMA-enabled CoMP transmission in dense cellular networks to maximize system sum-rate. In addition, the scope of \cite{Chincoli2018} was to investigate how machine learning may be used to bring wireless nodes to the lowest possible transmission power level and, in turn, to respect the quality requirements of the overall network. Lowering transmission power has benefits in terms of both energy consumption and interference.

However, both upload bandwidth and server's computing resources are limited. This provides us a new perspective to power control for edge wireless network. In \cite{Wu2021}, Wu \emph{et al.} first studied the joint multiuser offloading and transmission power control optimization problem in a multi-channel wireless interference scenario. They then proposed an efficient semi-distributed algorithm consisting of two subalgorithms of offloading scheme and transmission power control.

All the methods stated above guarantee QoS will not deteriorate due to interference. The same is true in this paper. We build a signal coverage model simplified communication process and paid attention to the location of facilities and the capacity of AP. Our objective is to determine the power assignment in MEC such that APs with limited capacity can cover all TDs and minimize the total power. We define this problem as the \emph{minimum power capacitated cover} (MPCC) problem, which is an NP-hard problem. A local ratio approach is proposed to solve it. Finally, various experiments verify that this method can outperform another greedy-based way.

\subsection{Related Works}
% 一直以来，资源分配问题在各个领域都受到很多关注，它有很多不同的优化目标。 另一方面，资源分配问题通常都是NP-hard问题，如何在较短时间内得到令人满意的结果是一个巨大的挑战。 论文【】以社会福利最大化为目标，为基于实时视频网站的云边协作框架中的资源分配问题提出了一种拍卖机制。
The problem of resource allocation has been paid much attention in various fields, and it has many different optimization objectives. Furthermore, resource allocation problems are usually NP-hard problems, and how to get satisfactory results in a short time is a considerable challenge. Aiming at social welfare maximization, reference \cite{10.1007/s00607-021-00962-6} proposed an auction mechanism for resource allocation in the Cloud Edge collaboration framework based on live video webcast services. Reference \cite{8241849} addressed the problem of heterogeneous physical machines resource management (HPMRM) in heterogeneous clouds. Trade-offs between energy and performance are important for energy-aware scheduling. In reference \cite{LI201970}, a polynomial-time algorithm was designed for energy-aware profit maximizing scheduling problem.

MPCC is a fundamental \emph{minimum power coverage} (MPC) problem that has been extensively studied in the past two decades. Originally, the problem was introduced for modeling the power control problem in the communication of edge computing. Broadly, MPCC belongs to the family of \emph{minimum weight set cover} (MWSC) problems. MPCC can be viewed as a special variant of the MWSC that has the following general formulation: Given a universe $E$ of elements (points or terminal devices) and a family ${\mathcal G}$ of ranges (or geometric objects like disks and polygons) where each range has a weight assignment $w:\mathcal{G} \mapsto {\mathbb{R}^ + }$, determine a set $\mathcal{F} \subseteq \mathcal{G}$ of ranges such that every element in $E$ is covered by at least one range in $\mathcal{F}$ and that the total weight $\sum\nolimits_{G \in \mathcal{F}} {w(G)} $ is minimized. Clearly, MPCC can be formulated as a special MWSC, where $\mathcal{G}$ is the set of all possible disks centered at some APs  having at least one TD on their boundaries. The difference is that the capacity of each $a \in A$ is limited to $k(a)$; therefore, the capacity of the disk centered at $a$ in $\mathcal{G}$ is also constrained.

The MWSC problem is, in general,  challenging to  solve optimally, even for some simple versions. For example, Alt  \emph{et al.} and Bil\`{o} \emph{et al.} present a minimum cost covering problem without capacity constraint in \cite{Alt2006, Bilo2005}, which is still NP-hard for any $\alpha  > 1$. Thus, finding polynomial time approximation algorithms is the main objective for MPCC problems.

For the \emph{geometric minimum weight set cover} problem, it was studied from different aspects. For the minimum disk cover problem, in which disks may have different sizes, Mustafa and Ray designed a PTAS using a local search method \cite{Mustafa2010}.  Considering weight, Varadarajan \cite{Varadarajan2010} presented a clever quasi-uniform sampling technique that was improved by Chan \emph{et al.} \cite{Chan2012}, yielding a constant approximation for the minimum weight disk cover problem. About partial cover problem, Liu \emph{et al.} \cite{Liu2021} introduced the $k$-prize-collecting minimum power cover problem ($k$-PCPC), and present a novel two-phase primal-dual algorithm and got an approximation ratio of at most $3^{\alpha}$. The author then introduced the minimum power cover problem with submodular and linear penalties in \cite{Liu2022} and presented a combinatorial primal-dual $(3^{\alpha} + 1)$-approximation algorithm.

In previous studies, the minimum power cover problem has rarely been considered for capacity constraints. However, for the vertex cover problem, another research topic of the set cover problem, many studies consider the minimum capacitated vertex cover problem. Consider a graph $G = (V,E)$ with weights and capacities on the vertices. All edges must be covered by vertices under the capacity constraint, and the objective is to minimize the weight. \cite{Guha2003} give a primal-dual-based approximation algorithm with an approximation guarantee of 2 (every edge has two vertices). The study of the vertex cover problem with hard capacity constraints (VC-HC) was initiated in \cite{Chuzhoy2006}. They established a surprising result that, while this setting admits constant factor approximations, it becomes set-cover hard when $\left\{ {0,1} \right\}$-weighted vertices are considered. \cite{Kao2021} considered VC-HC on hypergraphs. This literature, which considers a hypergraph $G$ with a maximum edge size of $f$, provides an iterative partial rounding technique and obtains an $f$-approximation, improving upon the previous results of \cite{Cheung2014}.
\section{System model}
In this section, we present the system model and a corresponding integer programming model. We focus on power control with capacitated APs and covering all TDs to minimize total power.

\subsection{Model Description}
Consider $m$ APs with same capacity and $n$ TDs randomly distributed in 2-dimensional space. Each TD accesses the edge network through its covering AP. Let $(U,A,k)$ be an instance of the MPCC problem with $U = \left\{ {1,2,...,n} \right\}$ and $A = \left\{ {1,2,...,m} \right\}$, and let $k \in {\mathbb{Z}^ + }$ denote the capacity of IP to each AP $a \in A$. Each AP $a$ can adjust its own power, and the relationship between the power $p(a)$ and the radius $r(a)$ of its service area is
\begin{equation}
    p(a) = c \cdot r{(a)^\alpha },
  \label{eq:1}
\end{equation}
where $c$ and $\alpha$ are constants ($\alpha$ is usually called the \emph{attenuation factor}). The center and radius can be used to define a disk, so $a$ and $u \in U$ can determine a unique disk ${D_{au}}$ whose radius is precisely equal to the distance between $a$ and $u$ ($u$ is on the boundary of ${D_{au}}$ in the optimal case). Therefore, at most $mn$ disks must be considered. We denote the set of such disks by $\mathcal{D}$. Each instance corresponds to a $\mathcal{D}$, and we use $D$ to represent any disk in $\mathcal{D}$. The set of disks with $a$ as the center is a subset of $\mathcal{D}$, denoted by $\mathcal{D}(a)$. To simplify the notation, we use $D$ to represent both a disk in $\mathcal{D}$ and the set of TDs contained in D and use $r(D)$ and $p(D)$ to denote the radius and power of disk $D$, where $p(D) = c \cdot r{(D)^\alpha }$. The mapping $k( \cdot ):\mathcal{D} \mapsto k$ can determine the capacity of $D \in \mathcal{D}$. Notably, $u$ \emph{contained} in $D$ means that $u$ is within the range of $D$ ($u \in D$ ) and $u$ is \emph{covered} by $a$ means that $u$ gains an IP of $a$ to access the edge network (the former does not occupy the resources of the corresponding AP). Clearly, the capacity of $D \in \mathcal{D}(a)$ formed by different power strategies of one AP is equal and changes together.  Table \ref{table:notation} summarizes the notation for reference.

Fig. \ref{fig1} shows an instance with three APs and six TDs. The capacities of AP1, AP2 and AP3 are all 2, respectively (the first number in brackets indicates the AP index, the following represents its capacity). The disk formed by AP1 contains TDs 1, 2 and 4, TDs 3,5 is contained by the disk formed by AP2, and TDs 4, 5 and 6 are contained by AP3's disk. Notably, TD4 and TD5 are contained by two disks simultaneously. Since the capacity of all APs are only 2, TD4 is eventually covered by AP3 and TD5 is covered by AP2.
\begin{figure}[htbp]
  \centering
  \includegraphics[width=6cm]{./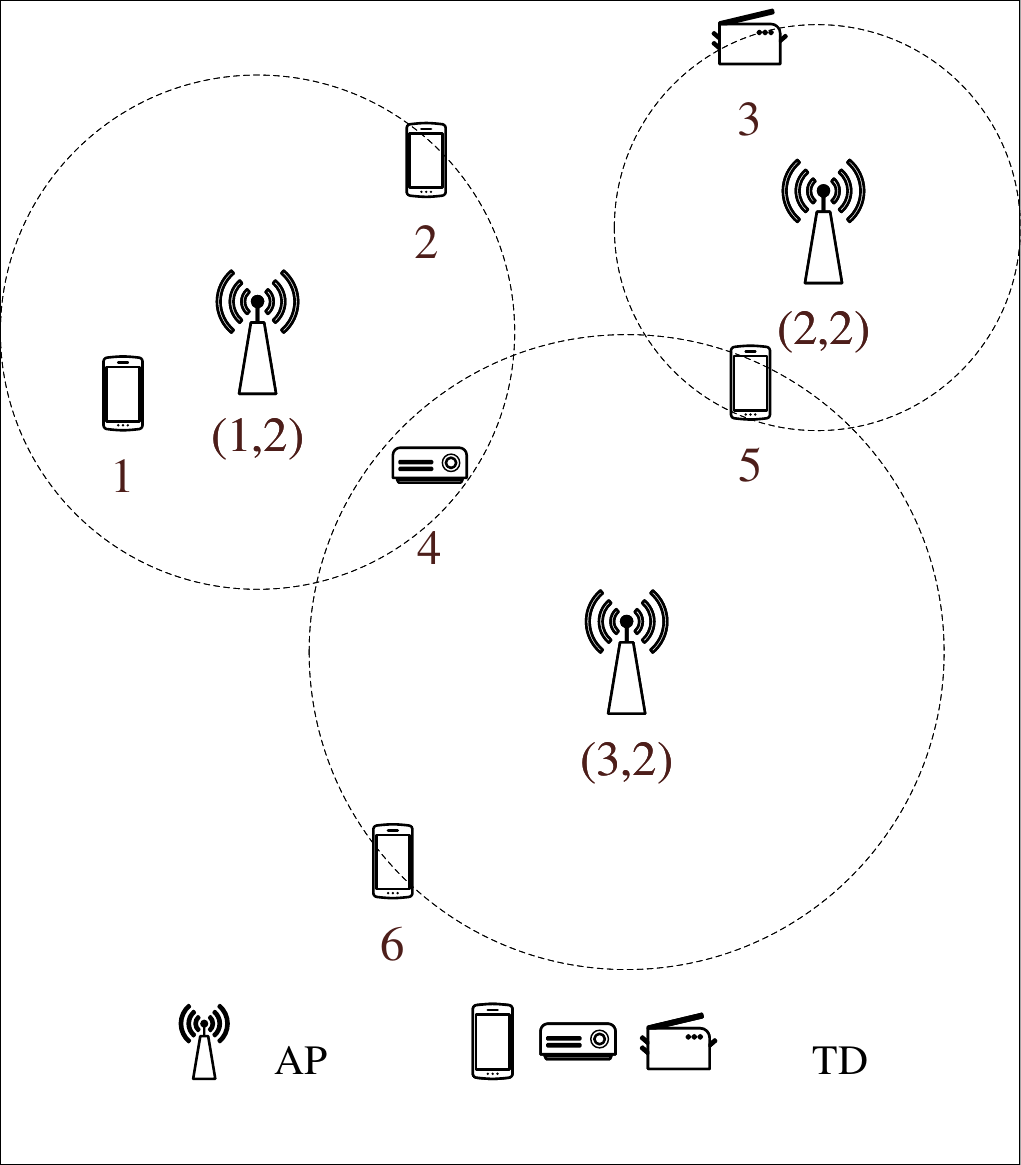}
  \caption{An instance of the MPCC problem.}
  \label{fig1}
\end{figure}

In practical situations, a TD may locate outside the maximum coverage that all APs can form, or the number of TDs exceeds the total capacity of all APs. We assume that an AP has no upper power limit and that the entire system has enough capacity to cover all the TDs. In this way, we can obtain a scheme covering all TDs through the technique in this paper, which can provide a reference for the later decision on whether to cover some distant TDs. We can also solve the instance where the demand exceeds the capacity by using the method of this paper many times. Therefore, the above two assumptions are reasonable.
\subsection{Integer Program for the MPCC Problem}
For a subset $\mathcal{\bar D}$ of $\mathcal{D}$, $\mathcal{\bar D}$ is a feasible solution to the MPCC problem and must meet the following conditions:
\begin{enumerate}
  \item \textbf{Full coverage condition.}
      $\mathcal{C}( \mathcal{\bar D}) = \bigcup\nolimits_{D \in \mathcal{\bar D}} D$ denotes the TDs covered by $\mathcal{\bar D}$; then, $\mathcal{C}( \mathcal{\bar D}) = U$ must be established.
  \item \textbf{Limited capacity condition.} Since the IP capacity of each AP is limited, the number of TDs covered by an AP must be less than its capacity.
  \item \textbf{Power uniqueness.} For any $a \in A$, there are $n$ power strategies, but at most one strategy is selected; that is, there is at most one disk with $a$ as the center in $\mathcal{\bar D}$.
\end{enumerate}
According to the above three conditions, we can formulate an integer programming model for the MPCC problem. Variable ${z_D} \in \left\{ {0,1} \right\}$ indicates whether $D \in \mathcal{D}$ is selected; that is, ${z_D} = 1$ if and only if $D$ is selected. Variable ${y_{uD}} \in \left\{ {0,1} \right\}$ indicates whether $u \in U$ is covered by $D$, where ${y_{uD}} = 1$ if and only if $u$ is covered by $D$. The following is the integer program:
\begin{align}\label{ip:2}
  \min  & \sum\limits_{D \in \mathcal{D}} {p\left( D \right)}  \cdot {z_D} \\
  \label{ip:2.1}s.t. & \sum\limits_{D:u \in D} {{y_{uD}}}  \ge 1,u \in U \\
  \label{ip:2.2}& {k(D)}{z_D} - \sum\limits_{u:u \in D} {{y_{uD}}}  \ge 0,D \in \mathcal{D}\\
  \label{ip:2.3}& {z_D} \ge {y_{uD}},u \in D \in \mathcal{D}\\
  \label{ip:2.4}& \sum\limits_{D \in \mathcal{D}(a)} {{z_D}}  \le 1,a \in A\\
  \label{ip:2.5}& {z_D} \in \left\{ {0,1} \right\},{y_{uD}} \in \left\{ {0,1} \right\},\forall u \in U,\forall D \in \mathcal{D}
\end{align}

Constraint (\ref{ip:2.1}) corresponds to the full coverage condition, (\ref{ip:2.2}) corresponds to the limited capacity condition, and (\ref{ip:2.4}) corresponds to the power uniqueness. Constraint (\ref{ip:2.3}) is an implicit condition in this integer program. If $u$ is covered by $D$ in the final result set, then $D$ must be selected as the power strategy of its center AP.

\begin{table}[t!]
    \caption{Notation}
    \centering
    \label{table:notation}
    \newcommand{\tabincell}[2]{\begin{tabular}{@{}#1@{}}#2\end{tabular}}
    \begin{tabular}{lll}
        \hline
        \textbf{Notation} & \textbf{Description}\\ \hline
        $z_D$ & \tabincell{l}{integer variable indicating whether $D$ is selected, \\  ${z_D} \in \left\{ {0,1} \right\}$} \\
        $y_{uD}$ & \tabincell{l}{integer variable indicating whether $u$ is covered\\ by $D$, ${y_{uD}} \in \left\{ {0,1} \right\}$} \\
        $p(D)$ & power of $D$ \\
        $k(D)$ & capacity of $D$\\
        $\mathcal{D}(a)$ & set of disks centered on $a$ \\
        $d_D$  & number of TDs contained in $D$ \\
        ${\hat k_D}$  & \tabincell{l}{capacity of $D$ and varieties during the process} \\
        ${\hat p_D}$  & \tabincell{l}{power increment of $D$ and varieties during the\\ process} \\
        $l_a$  & latest broken disk for $a$ \\
        $C_a$  & set of TDs covered by $a$ \\
        $e_D$  & local ratio of $D$ \\
        $D^*$  & disk with minimum local ratio \\
        $a^*$  & central AP of $D^*$ \\
        $D_{rmv}$  & set of disks that need to be removed \\
        \hline

    \end{tabular}
\end{table}
\section{A Local Ratio Approach for MPCC}
A local ratio approach is any algorithm that follows the problem-solving heuristic of making the locally optimal choice in each stage\cite{Cormen1990}. It is a greedy-based strategy. In many problems, the greedy strategy does not produce an optimal solution, but a greedy heuristic can yield locally optimal solutions that approximate a globally optimal solution in a reasonable amount of time. Since the MPCC problem is NP-hard, it is a considerable challenge to obtain an optimal solution for integer programming (\ref{ip:2}). Therefore, we propose a local ratio approach named \emph{minimum local ratio} (MLR), which can obtain an approximate solution for MPCC.

In a practical scenario, when an AP cannot serve two or more TDs that have the same distance from it simultaneously, the AP will determine which one to serve first based on factors such as the TD’s request sequence or the urgency of the task. How to determine which TD to serve in this case is not the focus of this paper. Thus, we present definition \ref{d1} to define the power partial order relationship between two disks. After the partial order relationship between the two disks with the same radius determined by the two TDs is clarified, who the AP
should serve first becomes evident. Map the positions of all TDs and APs to a two-dimensional coordinate system, and use $\cos (\overrightarrow {au} )$ to represent the cosine of the angle between vector $\overrightarrow {au}$ and the $x$-axis formed by $a$ and $u$.

\begin{definition} \label{d1}
The distance between TDs $u$ and $v$ from AP $a$ is the same; that is, $r({D_{au}}) = r({D_{av}})$. If $\cos (\overrightarrow {au} ) > \cos (\overrightarrow {av} )$, then $r({D_{au}}) \succ r({D_{av}})$, and the same $p({D_{au}}) \succ p({D_{av}})$. Although $r({D_{au}}) = r({D_{av}})$, ${D_{au}}$ contains $v$ but ${D_{av}}$ does not contain $u$.
\end{definition}

\subsection{Minimum Local Ratio Algorithm}
%候选集，从中创建解决方案
%一个选择函数，它选择要添加到解决方案的最佳候选者
%可行性函数，用于确定候选人是否可用于解决方案
%一个目标函数，它为解决方案或部分解决方案分配一个值，以及
%一个解决方案函数，它将指示我们何时发现了一个完整的解决方案
%贪心算法在一些数学问题上产生了很好的解决方案，但在其他问题上却没有。 它们解决的大多数问题都有两个属性：
%A candidate set, from which a solution is created
%A selection function, which chooses the best candidate to be added to the solution
%A feasibility function, that is used to determine if a candidate can be used to contribute to a solution
%An objective function, which assigns a value to a solution, or a partial solution, and
%A solution function, which will indicate when we have discovered a complete solution
In Algorithm \ref{alg:MLR}, we describe the overall logic of the MLR algorithm in the form of pseudocode. Its input is an MPCC problem instance $(U,A,k)$, a set of disks $ \mathcal{D}$ and a mapping of power $p$. $\mathcal{D}$ is a candidate set from which a solution is created. After the input, we define some key variables. $d_D$ records the number of TDs that $D$ can currently contain (as the algorithm progresses, $d_D$ changes). Let ${\hat k_D}$ denote the capacity of $D$, which is initialized to $k(D)$. This capacity may change with the continuous allocation of TDs. The variable ${\hat p_D}$ is the power increment of $D$, which is initialized to $p(D)$. We define $l$ as the set of disks finally selected for all APs, where $l_a$ is the disk finally selected for $a$, so $l = \bigcup\nolimits_{a \in A} {{l_a}} $. Similarly, $C_a$ records the set of TDs that are eventually covered by $a$, and $C = \bigcup\nolimits_{a \in A} {{C_a}}$. After Algorithm \ref{alg:ChooseDisks}, we can obtain the results of $l$ and $C$, and the final result ${ \mathcal{ \bar D}}$ is the set of non-null values in $l$.

In the main loop of Algorithm \ref{alg:ChooseDisks}, each iteration calculates the local ratio $e_D$ corresponding to $D \in {\mathcal{D}}$ in the current iteration state. On the basis of lemma \ref{lemma1}, the disk ${D^*}$ that has the minimum local ratio in each iteration satisfies the inequality ${d_{{D^*}}} \le {\hat k_{{D^*}}}$. When ${D^*}$ is selected, we update the value of ${l_{{a^*}}}$ to $D^*$, where $a^*$ is the center of $D^*$, and then assign the TDs contained by $D^*$ in the current iteration to $a^*$ for coverage.

\begin{figure}[!t]

  \renewcommand{\algorithmicrequire}{\textbf{Input:}}
  \renewcommand{\algorithmicensure}{\textbf{Output:}}
  \begin{algorithm}[H]
    \caption{MLR}
    \label{alg:MLR}
    \begin{algorithmic}[1]
      \Require An instance of MPCC ${\text{(}}U,A,k{\text{)}}$, a set of disks ${\mathcal{D}}$, a mapping of power $p:{\mathcal{D}} \mapsto {\mathbb{R}^ + }$.
      \Ensure A set of selected disks $\mathcal{\bar D}$ and the set of coverage results for all APs $C$.
      \State ${d_D} \leftarrow \lvert D \rvert, \forall D \in {\mathcal{D}}$.
      \State ${\hat k_D} \leftarrow k(D),\forall D \in {\mathcal{D}}$.
      \State ${\hat p_D} \leftarrow p(D),\forall D \in {\mathcal{D}}$.
      \State $l,C \leftarrow {\text{ChooseDisks}}(U,d,\hat k,\hat p)$.
      \For{each $a \in A$}
      \If{${l_a} \ne \emptyset $}
      \State ${ \mathcal{\bar D}} \leftarrow { \mathcal{\bar D}} \cup {l_a}$.
      \EndIf
      \EndFor
      \State\Return ${ \mathcal{\bar D}},C$.
    \end{algorithmic}
  \end{algorithm}
\end{figure}

\begin{figure}[!t]
  \renewcommand{\algorithmicrequire}{\textbf{Input:}}
  \renewcommand{\algorithmicensure}{\textbf{Output:}}
  \begin{algorithm}[H]
    \caption{ChooseDisks}
    \label{alg:ChooseDisks}
    \begin{algorithmic}[1]
      \Require $U,d,\hat k,\hat p$.
      \Ensure $l,C$
      \State ${l_a} \leftarrow \emptyset,\forall a \in A.$($l_a$ is the latest broken disk for $a$)
      \State ${C_a} \leftarrow \emptyset,\forall a \in A$.($C_a$ is the set of TDs covered by $a$)
      \While {$U \ne \emptyset $}
      \State ${e_D} \leftarrow {\hat p_D}/\min ({\hat k_D},{d_D}),\forall D \in {\mathcal{D}}$.
      \State ${D^*} \leftarrow \arg \min ({e_D}:D \in {\mathcal{D}})$(${d_D} \le {\hat k_D}$ is established).
      \State ${a^*} \leftarrow {\text{the}}\;{\text{central}}\;{\text{AP}}\;{\text{of}}\;{D^*}$.
      \State ${l_{{a^*}}} \leftarrow {D^*}$.
      \State ${C_{{a^*}}} \leftarrow {C_{{a^*}}} \cup {D^*}$. (Make TDs in ${D^*}$ covered by $a^*$)
      \If {${d_{{D^{\text{*}}}}} = {\hat k_{{D^*}}}$}
      \State ${{\mathcal{D}}_{rmv}} \leftarrow \{ D:D \in {\mathcal{D}}({a^*})\}$.
      \Else {${d_{D^*}} < {\hat k_{D^*}}$}
      \State ${{\mathcal{D}}_{rmv}} \leftarrow \{ D:D \in {\mathcal{D}}({a^*})\;{\text{and}}\;r(D) \prec r({D^*})\}  \cup \{ {D^*}\} $.
      \EndIf
      \State ${\mathcal{D}} \leftarrow {\mathcal{D}} \backslash {{\mathcal{D}}_{rmv}}$.
      \State ${\hat p_D} \leftarrow {\hat p_D} - {e_{{D^*}}}\min ({\hat k_D},{d_D}),\forall D \in {\mathcal{D}}$.
      \For {each ${u^*} \in {D^*}$}
      \State $U \leftarrow U\backslash \left\{ {{u^*}} \right\}$.
      \State $D \leftarrow D\backslash \left\{ {{u^*}} \right\},\forall D \in {\mathcal{D}}:{u^*} \in D$
      \State ${\hat k_D} \leftarrow {\hat k_D} - 1,\forall D \in {\mathcal{D}}({a^*})$
      \If {${d_D} = 0$ or ${\hat k_D} = 0,\forall D \in {\mathcal{D}}$}
      \State ${\mathcal{D}}: = {\mathcal{D}} \backslash \{ D\} $.
      \EndIf
      \EndFor
      \EndWhile
      \State \Return $l,C$.
    \end{algorithmic}
  \end{algorithm}
\end{figure}

\begin{lemma} \label{lemma1}
The disk ${D^*}$ selected in each iteration in Algorithm \ref{alg:ChooseDisks} must satisfy ${d_{{D^*}}} \le {\hat k_{{D^*}}}$.
\end{lemma}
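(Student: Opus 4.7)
The plan is to prove Lemma~\ref{lemma1} by fixing an arbitrary iteration and showing that at every AP $a$ there is a minimizer of $e_D$ over $\mathcal{D}(a)\cap\mathcal{D}$ whose currently-contained-TD count does not exceed $\hat{k}_a$; the global $\arg\min$ over $\mathcal{D}$ then inherits the property. To that end, I would index the disks of $\mathcal{D}(a)\cap\mathcal{D}$ still present at the current iteration as $D_{(1)}\prec D_{(2)}\prec\cdots\prec D_{(n_a)}$ in the strict order given by Definition~\ref{d1}. Because each $D_{(i+1)}$ absorbs exactly one additional TD relative to $D_{(i)}$, and that TD is either already covered or still uncovered, the sequence $d_{D_{(1)}},d_{D_{(2)}},\ldots$ is non-decreasing and grows by at most one per step.

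If $d_{D_{(i)}}<\hat{k}_a$ for all $i$, every remaining disk at $a$ already satisfies $d\le\hat{k}_a$ and there is nothing to show. Otherwise let $i_0$ be the smallest index with $d_{D_{(i_0)}}\ge\hat{k}_a$; the unit-step property of the $d$-sequence forces $d_{D_{(i_0)}}=\hat{k}_a$, and for every $i\ge i_0$ one has $e_{D_{(i)}}=\hat{p}_{D_{(i)}}/\hat{k}_a$. Hence any minimizer of $e$ in the range $i\ge i_0$ is simply a disk of smallest $\hat{p}$, and it suffices to show $\hat{p}_{D_{(i_0)}}\le\hat{p}_{D_{(i)}}$ for every $i\ge i_0$, so that $D_{(i_0)}$ provides a minimizer with $d=\hat{k}_a$.

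The key auxiliary claim is the following monotonicity invariant: for every pair $D_1,D_2\in\mathcal{D}(a)$ with $D_1\subseteq D_2$ and with both $d_{D_1},d_{D_2}\ge\hat{k}_a$ at the start of an iteration, one has $\hat{p}_{D_1}\le\hat{p}_{D_2}$. I would prove this by induction on the iteration count. The base case is immediate since $\hat{p}_D$ is initialized to $p(D)=c\,r(D)^\alpha$, which is monotone in $r$. For the inductive step, the update $\hat{p}_D\leftarrow\hat{p}_D-e_{D^{*}_{\mathrm{prev}}}\min(\hat{k}_D,d_D)$ subtracts the common quantity $e_{D^{*}_{\mathrm{prev}}}\hat{k}_a$ from both $\hat{p}_{D_1}$ and $\hat{p}_{D_2}$ while they remain in the $d\ge\hat{k}_a$ regime, preserving the inequality; meanwhile the TD-covering steps in the inner \emph{for}-loop decrement $d_{D_1}$ and $d_{D_2}$ by equal amounts (the count of newly covered TDs lying in $D_1$, which also lie in $D_2$), and when the pick lies in $\mathcal{D}(a)$ itself, $\hat{k}_a$ drops by the same quantity $d_{D^{*}_{\mathrm{prev}}}$ as every $d_D$ on surviving disks.

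The principal obstacle I anticipate is precisely this case analysis for the invariant, especially for iterations that shrink $\hat{k}_a$ or prune part of $\mathcal{D}(a)$. The removal rule in lines 9--13 of Algorithm~\ref{alg:ChooseDisks} resolves matters cleanly: it either purges all of $\mathcal{D}(a)$ (making the invariant vacuous at $a$) or retains only the disks strictly larger than the current pick, and in the latter case the survivors still form a $\prec$-chain whose $\hat{p}$ order is inherited from the preceding iteration because the difference $d_D-\hat{k}_a$ is preserved on them. Once the invariant is secured, the argument in the second paragraph completes the proof of Lemma~\ref{lemma1}.
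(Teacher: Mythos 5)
Your proposal takes essentially the same route as the paper's proof: both arguments reduce to the observation that if the selected disk over-contains, then a smaller concentric disk containing exactly $\hat k$ currently-uncovered TDs has the same denominator $\hat k$ and a no-larger numerator $\hat p$, hence a no-larger local ratio, so the $\arg\min$ can always be taken to satisfy $d_{D^*}\le \hat k_{D^*}$. The only difference is one of rigor rather than of method: the paper asserts ${\hat p_{D^-}}\prec{\hat p_{D^*}}$ in a single line, whereas you explicitly formulate and prove (by induction over iterations) the monotonicity invariant for the reduced powers $\hat p$ along the $\prec$-chain at each AP, which is precisely the step the paper's three-line contradiction leaves implicit.
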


\begin{proof}
Assuming that the ${D^*}$ selected in an iteration satisfies ${d_{{D^*}}} > {\hat k_{{D^*}}}$, we know that ${e_{{D^*}}}= {\hat p_{{D^*}}}/{\hat k_{{D^*}}}$. Then, there must be a disk ${D^ - },p({D^ - }) \prec p({D^*})$, that happens to have ${d_{{D^ - }}} = {\hat k_{{D^ - }}}$. In the current iteration ${\hat p_{{D^ - }}} \prec {\hat p_{{D^*}}}$ and ${e_{{D^ - }}} \prec {e_{{D^*}}}$, which is contrary to the hypothesis.
\end{proof}

In the second part of the main loop of Algorithm \ref{alg:ChooseDisks}, all the relevant variables are updated. If  ${d_{{D^{\text{*}}}}} = {\hat k_{{D^*}}}$ in the current iteration, all the disks in $\mathcal{D}$ with $a^*$ as the center are removed. If ${d_{{D^{\text{*}}}}} < {\hat k_{{D^*}}}$, then there are remaining IP resources for $a^*$; we remove these disks in $\mathcal{D}$, with $a^*$ as the center and a smaller radius, and ${D^*}$ itself. For line 12 of Algorithm \ref{alg:ChooseDisks}, the power increment of all remaining disks in $\mathcal{D}$ is updated. We remove all the TDs allocated for the current iteration from the related set in the \texttt{for} loop. We also remove all disks from $\mathcal{D}$ where the number of TDs contained in the disk is 0 or its capacity is 0. Finally, $l$ can record the disk for each AP (an AP may not choose any disk).

\begin{theorem}
\emph{MLR} computes a feasible solution to ILP \ref{ip:2} in polynomial time.
\label{theo_correctness}
\end{theorem}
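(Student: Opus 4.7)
The plan is to verify feasibility with respect to each of the three constraint families of the ILP (\ref{ip:2}) and then bound the running time. Given the outputs $\mathcal{\bar D}$ and $C$ of the algorithm, I would read off the ILP assignment by setting $z_D = 1$ iff $D = l_a$ for some $a \in A$, and $y_{uD} = 1$ iff $D = l_a$ and $u \in C_a$. I would first argue termination: while $U \ne \emptyset$, the instance's feasibility assumption guarantees that $\mathcal{D}$ still contains a disk with $d_D \ge 1$ and $\hat{k}_D \ge 1$, so some $D^*$ is selected and at least one TD is deleted from $U$ in every iteration, bounding the outer while loop by $n$ iterations.

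For feasibility, the full-coverage constraint (\ref{ip:2.1}) is immediate because every TD deleted from $U$ is added to $C_{a^*}$ in the same iteration and the loop runs until $U = \emptyset$. Power uniqueness (\ref{ip:2.4}) holds because the assignment $l_{a^*} \leftarrow D^*$ overwrites any prior choice at $a^*$, so $\mathcal{\bar D}$ contains at most one disk per AP. The capacity constraint (\ref{ip:2.2}) follows from Lemma \ref{lemma1}, which guarantees $d_{D^*} \le \hat{k}_{D^*}$ at every selection, together with the invariant maintained by the inner for-loop that $\hat{k}_D = k - |C_a|$ for every surviving $D \in \mathcal{D}(a)$; once $\hat{k}_D$ reaches zero the disk is discarded, so $|C_a| \le k$ throughout.

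The subtlest feasibility point, and what I expect to be the main obstacle, is constraint (\ref{ip:2.3}): the final disk $l_a$ must actually contain every TD credited to it in $C_a$, even though $l_a$ was written only on the \emph{last} iteration in which $a$ was chosen. I would handle this via a monotonicity invariant on radii at each AP. In the $d_{D^*} < \hat{k}_{D^*}$ branch, line 12 of Algorithm \ref{alg:ChooseDisks} removes from $\mathcal{D}$ every disk in $\mathcal{D}(a^*)$ with radius $\preceq r(D^*)$, so any later selection at $a^*$ has strictly larger radius in the sense of Definition \ref{d1} and therefore contains every TD previously placed in $C_{a^*}$. In the $d_{D^*} = \hat{k}_{D^*}$ branch no further choice at $a^*$ is possible. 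Inducting on the sequence of selections at each AP yields $C_a \subseteq l_a$, which together with the previous steps shows the suggested variable assignment satisfies (\ref{ip:2.3}) and is a feasible integral solution.

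For the running time, $|\mathcal{D}| \le mn$ initially and only decreases over time, and each outer iteration performs $O(mn)$ work to recompute the ratios $e_D$, extract the minimiser, update the residual powers $\hat{p}_D$, and perform the disk removals. Combined with the bound of at most $n$ outer iterations, the total running time is $O(mn^2)$, which is polynomial in the size of the input.
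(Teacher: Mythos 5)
Your proposal is correct and follows essentially the same route as the paper: verify each ILP constraint in turn (using Lemma \ref{lemma1} for capacity and the removal of smaller-radius disks at $a^*$ to ensure the final $l_a$ contains all of $C_a$), then bound the loop by $n$ iterations. You are somewhat more explicit than the paper on the constraint-(\ref{ip:2.3}) monotonicity argument and you obtain a tighter (amortized) time bound of $O(mn^2)$ versus the paper's $O(mn^3)$, but both analyses establish the same polynomial-time feasibility claim.
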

\begin{proof}[proof (Correctness)]
\emph{MLR} outputs a feasible solution for ILP \ref{ip:2} because the \texttt{while} loop body in \emph{ChooseDisks} guarantees all TDs can be covered by APs. At first, according to Lemma \ref{lemma1}, the disk $D^*$ selected in each iteration can guarantee the capacity to cover the TDs contained in $D^*$. Second, because the capacities of the disks with the same center AP change together, the latest disk $D^{last}$ of $a\in A$ can cover the TDs covered previously. In other words, when $D^{last}$ is selected, it can also cover the TDs previously assigned to $a$. The former two points can satisfy constraints (\ref{ip:2.2}), (\ref{ip:2.3}) and (\ref{ip:2.4}). We assume that the total capacity of the system can cover all TDs, and $\mathcal{D}$ is formed based on all TDs and APs. In theory, any TDs can be covered by any AP. This guarantees that MLR can satisfy constraint (3).

\emph{(Polynomial Running Time):} The \texttt{while} loop iterates at most $n$ times to select a disk for each AP in Algorithm \emph{ChooseDisks}. Lines 4-8 take $O(mn)$ time to obtain a disk with the minimum local ratio. Lines 9-14 can be completed in $O(m)$ time. Then, line 15 updates the MLR of every $D \in \mathcal{D}$ in $O(mn)$ time. The body of the \texttt{for} statement (lines 16-23) takes $O(mn^2)$ time to update the variables of the system state and remove the incapacity disks from $\mathcal{D}$. In conclusion, MLR runs in polynomial time ($O(mn^3)$).
\end{proof}

\subsection{An Instance for MPCC}
\begin{figure}[!t]
\centering
\includegraphics[width=8cm]{./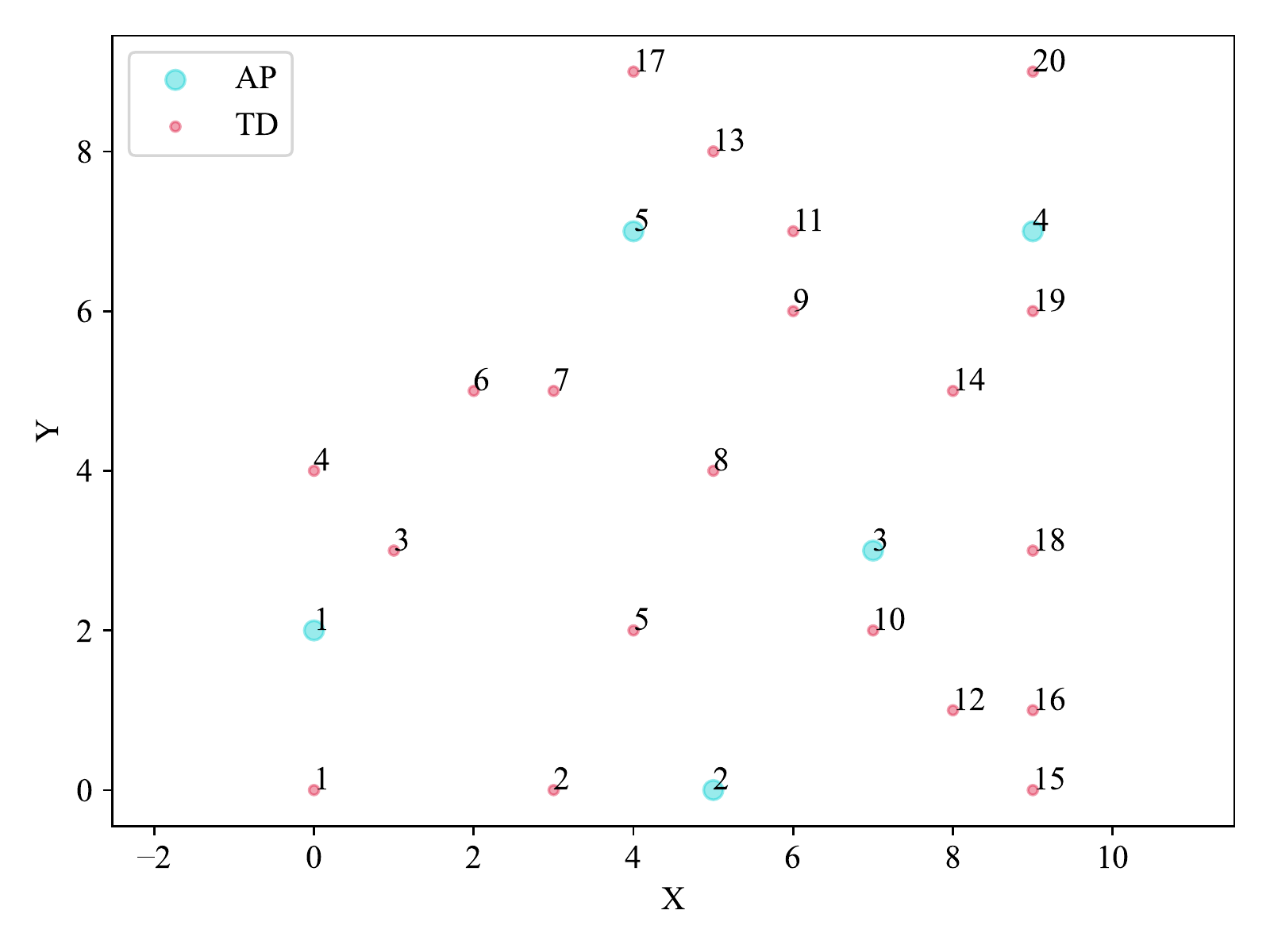}
\caption{An instance of using the MLR method to solve the MPCC problem}
\label{fig:ins}
\end{figure}
To further understand the MLR approach, we cite an instance to illustrate the process of it. In Fig. \ref{fig:ins}, we present an instance of the MPCC problem $(U, A, k)$, where $U = \{ 1,2,...,20\} $, $A = \{ 1,2,3,4,5\} $ and $k=5$. For ease of presentation, we abstract all facilities as dots distributed in 2-dimensional coordinates. The larger blue dots represent APs, and the smaller red dots represent TDs, as shown in Fig. \ref{fig:ins}. The disk drawn as a solid line is the final disk result obtained by the MLR method. The specific power value that each AP should provide can be calculated by the equation $p(a) = c \cdot r{(a)^\alpha }$ (in this case $c=1$, $\alpha=2$). The disk drawn as the dotted line is the disk temporarily selected by the MLR method during processing (that is, the value assigned to the variable $l_a$ during the algorithm's execution). There are nine disks  in Fig. \ref{fig:ins}, indicating that the main loop of the MLR method has been executed nine times in total. Next, we will introduce step by step how MLR obtained the results in this instance.

After inputting the instance $(U,A,k)$ and initializing the relevant variables, the algorithm enters the main loop for the first iteration. After calculating the local ratio $e$ of all disks in $\mathcal{D}$, the disk ${D_{3,10}}$ with the minimum local ratio is selected. In this iteration, $l_3$ is temporarily assigned as ${D_{3,10}}$, and TD10  is allocated to AP3. We remove ${D_{3,10}}$ from $\mathcal{D}$ and update local ratio of the remaining disks in $\mathcal{D}$ with ${e_{{D_{3,10}}}}$. Because TD10 is covered, at the end of the current iteration, the entire system must be updated. In the following iterations, this process is repeated continuously. According to the iteration sequence, the disks selected for each iteration are ${D_{3,10}}$, ${D_{4,19}}$, ${D_{5,7}}$, ${D_{3,8}}$, ${D_{1,4}}$, ${D_{2,5}}$, ${D_{4,20}}$, ${D_{2,16}}$ and ${D_{1,6}}$. After the algorithm's main loop ends, the result set of the algorithm is the union of all ${l_a},a \in A$, that is, $ \mathcal{\bar D} = \{ {D_{1,6}}, {D_{2,16}}, {D_{3,8}}, {D_{4,20}}, {D_{5,7}}\}$. In addition, the coverage information of all APs is also obtained: ${C_1} = \{ 1,3,4,6\} $, ${C_2} = \{ 2,5,15,16\} $, ${C_3} = \{ 8,10,12,14,18\} $, ${C_4} = \{ 19,20\}$ and ${C_5} = \{ 7,9,11,13,17\}$. These TDs, which are simultaneously contained by some disks in $\mathcal{\bar D}$, will be covered by the disk selected first. For example, ${D_{3,8}}$ and ${D_{2,16}}$ contain TD8,10 and 12 in Fig. \ref{fig:ins}. During the algorithm’s execution, ${D_{3,8}}$ is selected before ${D_{2,16}}$, so ${D_{3,8}}$ covers that TD8, 10, 12.

\section{Experimental Results}
According to the MLR algorithm proposed above, we use synthetic data to conduct practical experiments to simulate the power control for APs in the wireless network. The experiments ignore the vertical distribution of two facilities but map their positions to a 2-dimensional coordinate system. The relevant parameters are shown in Table \ref{table:param}. The specific experimental settings are as follows:

\begin{enumerate}
  \item The hardware configuration of the experimental environment is as follows: the CPU is an Intel i7-10700, configured with 8 cores and 16 threads at 2.9 GHz, with 16 GB memory, and a hard disk capacity of 1 TB.
  \item The entire system includes two facilities, TD and AP, which are distributed randomly in a 2-dimensional space.
  %EDITOR: Please ensure that the intended meaning has been maintained in this edit.
  The capabilities of each AP are limited, and all TDs must be covered by an AP. To ensure that the total system capacity meets the requirements of covering all TDs, we guarantee $m \cdot k \ge n$.
  \item Each experiment is performed 50 times, and the final results are averaged to reduce the impact of randomness.
  \item  The IP utilization rate of an AP is the ratio of the number of TDs covered by it to its capacity.
  \item The variance of IP utilization is defined by formula (\ref{eq:variance})
      \begin{equation}
        {s^2} = \frac{{\sum\nolimits_{a \in A} {{{\left( {\lvert {{C_a}} \rvert - n/m} \right)}^2}} }}{m}.
        \label{eq:variance}
      \end{equation}
      According to the property of variance, the smaller the value of ${s^2}$ is, the more balanced the number of TDs covered by each AP; on the contrary, a few APs cover all TDs.
  \item In this section, we compare the MLR with the OPT and NCA (nearest capable access) approaches. OPT uses IBM's open source tool CPLEX to obtain the optimal solution of the MPCC problem. NCA is another greedy-based method used to solve the MPCC problem. When deciding which AP will cover which TD in each iteration, it selects the closest AP-TD point pair where the AP still has
       capacity.
\end{enumerate}

\begin{table}[]
\caption{Configuration of experimental parameters}
\centering
\begin{tabular}{lll}
\hline
\textbf{Param} & \textbf{Description} & \textbf{Value} \\ \hline
$\alpha$ & Power parameter in equation \ref{eq:1} & [1,5] \\
$c$ & Power parameter in equation \ref{eq:1} & 1 \\
$k$ & Capacity of AP & [25,100] \\
$r$ & Side length of the area & [15,100] \\
$p^X_{iAP}$ ($p^X_{jTD}$) & X coordinate of AP i (TD j) & [0,100] \\
$p^Y_{iAP}$ ($p^Y_{jTD}$)& Y coordinate of AP i (TD j) & [0,100] \\ \hline
\label{table:param}
\end{tabular}
\end{table}

\subsection{Impact of the Number of TDs}
\begin{figure*}[!t]
\centering
\subfigure[Total power] {
\includegraphics[width=1.8in]{./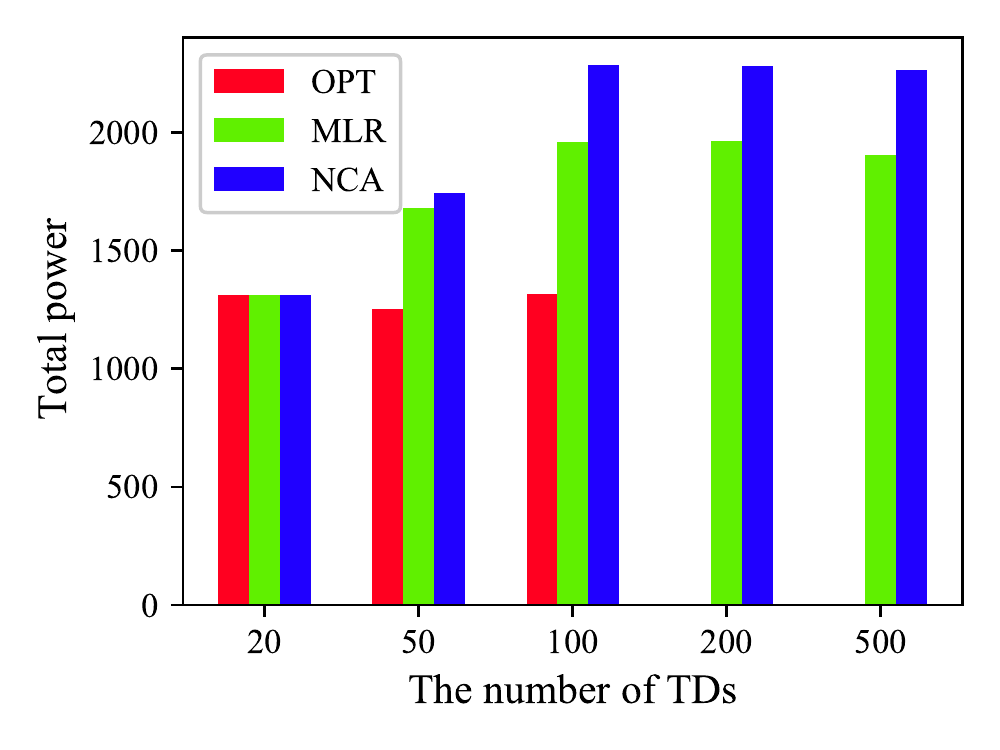}
\label{fig:case1:a}
}
\hfil
\subfigure[Execution time] {
\includegraphics[width=1.8in]{./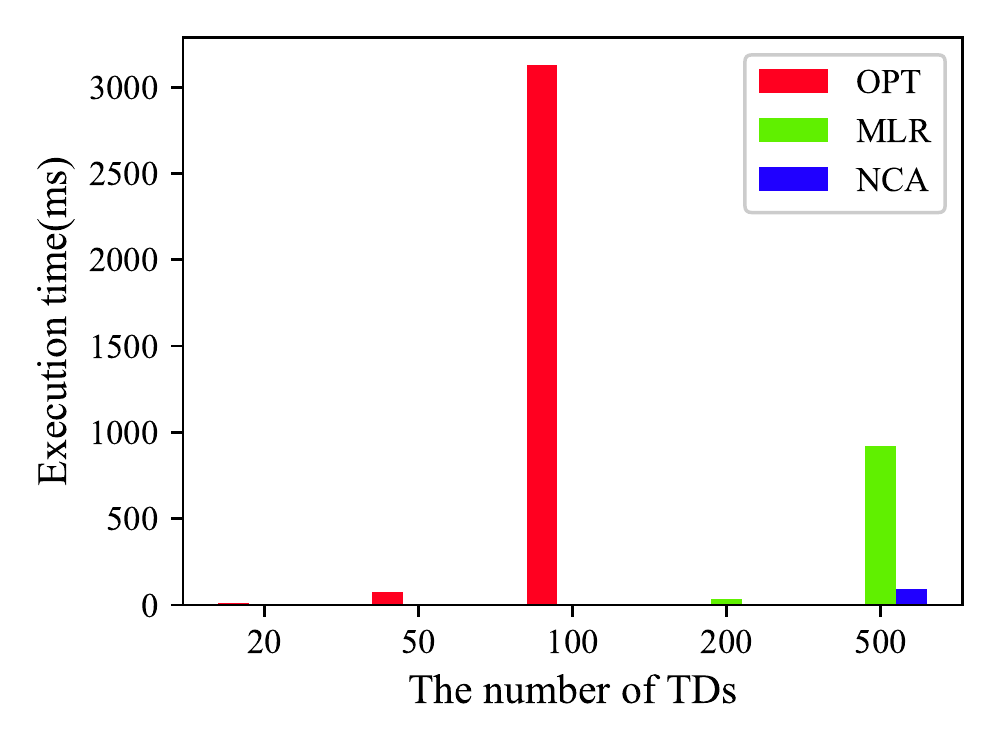}
\label{fig:case1:b}}
\hfil
\subfigure[Variance of IP utilization] {
\includegraphics[width=1.8in]{./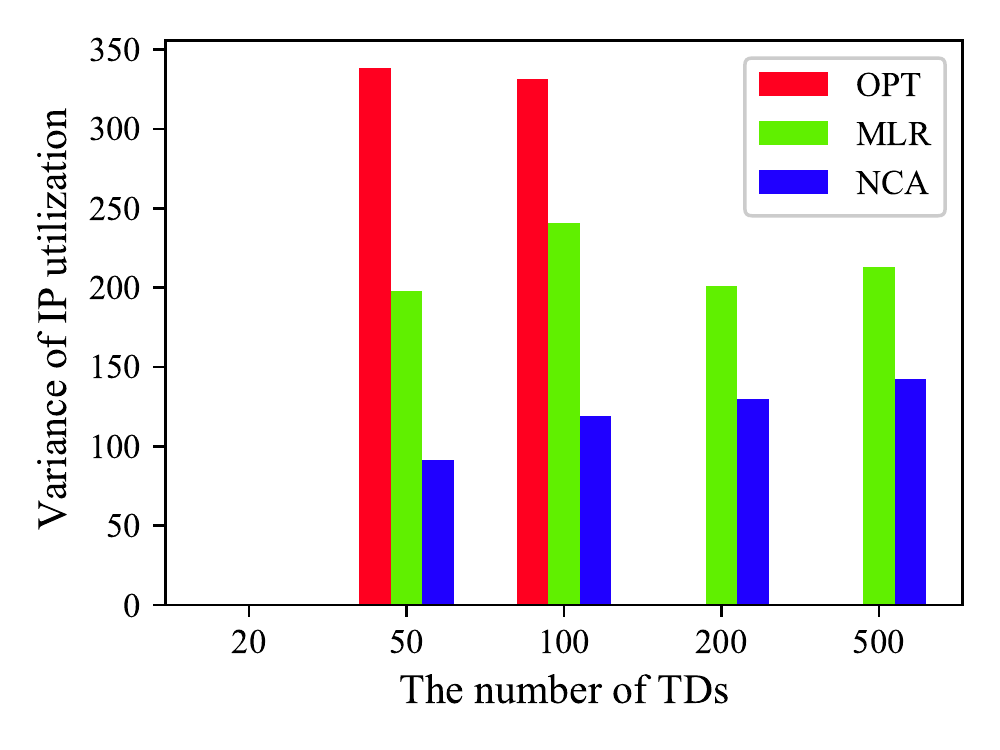}
\label{fig:case1:c}}
\caption{System performance under different numbers of TDs.}
\label{fig:case1}
\end{figure*}
In this experiment, we analyzed the impact of changes in the number of TDs on the MPCC problem. The main foci are  the total system power, algorithm execution time, and AP capacity utilization variance. The number of TDs gradually increases from 20 to 500, and all facilities are distributed in an area with a side length of 40. The ratio between the number of TDs and APs is maintained at 25:1, and the capacity of each AP is $ k = 40$. For the two constants in equation (\ref{eq:1}), the values are $c=1$ and $\alpha {\text{ = 2}}$. When the number of TDs is greater than 200, the optimal solution of a single instance cannot be obtained within 10 minutes.

Fig. \ref{fig:case1:a} shows the variation in the total power of the three algorithms as the number of TDs increases. Except that the power of the three methods is the same when the number of TDs is 20 (there is only one AP at this time), MLR is significantly better than NCA in all other cases. When the number of TDs changes in the range of 20 to 100, the total power of MLR and NCA increases significantly, while the change in OPT is modest because all facilities are randomly and evenly distributed in a fixed-size area. When the numbers of TDs and APs are small, the result of the optimal solution will not produce a large difference in the total power under different conditions. For MLR and NCA, in some cases, the pursuit of local optimization will produce worse results globally. When the number of TDs increases from 100 to 500, the  area remains a fixed size. The total power of MLR and NCA changes only slightly, and the total power of MLR actually decreases. When the number of TDs is small (100), their distribution is sparse, and APs require a larger radius to ensure coverage. When the number of TDs is large (500), the distribution of them and the APs is dense. Although more APs are activated, their coverage radii are reduced. Therefore, under the condition of considering only power (not considering the cost of APs) and a dense distribution of TDs, even if fewer APs can cover all the TDs, more APs have reduced power.

Fig. \ref{fig:case1:b} shows that the time required to obtain the optimal solution increases exponentially as the number of terminal devices increases. When the scale is small, both MLR and NCA can produce results in a very short time. Since MLR can still produce better results than NCA in 10 s at the maximum scale, this approach is acceptable in actual application scenarios. As shown in Fig. \ref{fig:case1:c}, in the results of OPT, AP coverage of TDs is more concentrated, while NCA is the most dispersed, and the result of MLR is somewhere in between.
\subsection{Impact of ${k}$}
In this experiment, we explored the impact of variations in the capacity of APs under the condition that the numbers of APs and TDs remain stable. Where $m = 4$ and $n = 100$, all facilities are randomly and evenly distributed in an area with a side length of 40. The capacity of AP $ k$ varies from 25 to 100. Clearly, when $ k = 25$, all APs can just meet the requirements of covering all TDs.

\begin{figure*}[!t]
\centering
\subfigure[Total power] {
\includegraphics[width=1.8in]{./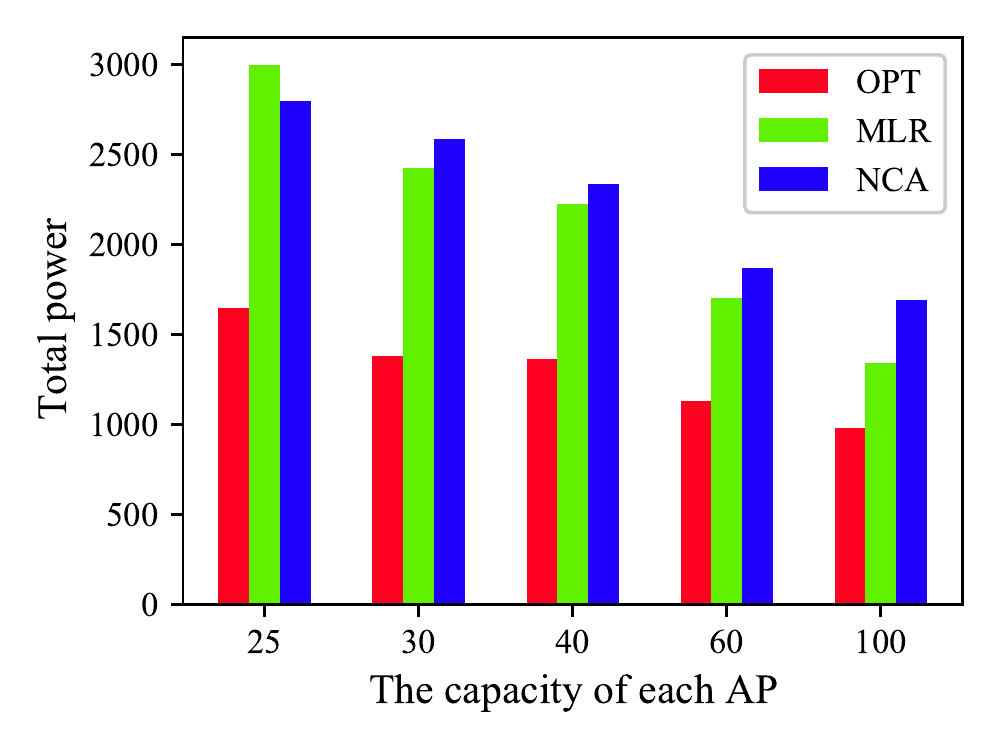}
\label{fig:case2:a}}
\hfil
\subfigure[Execution time] {
\includegraphics[width=1.8in]{./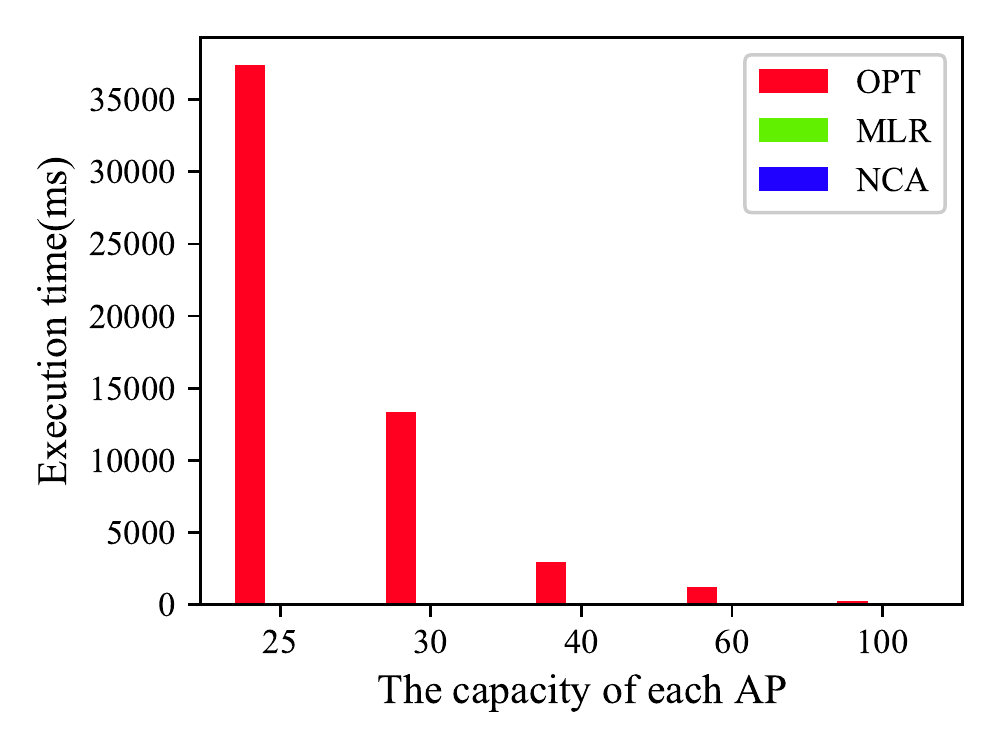}
\label{fig:case2:b}}
\hfil
\subfigure[Variance of IP utilization] {
\includegraphics[width=1.8in]{./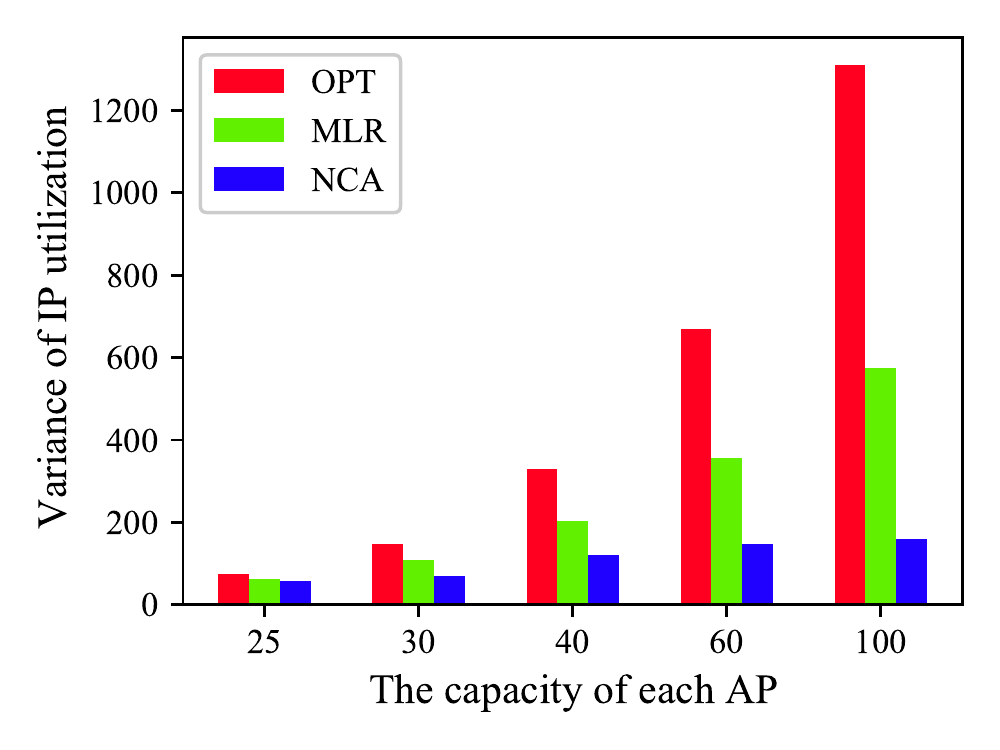}
\label{fig:case2:c}}
\caption{System performance under different ${k}$.}
\label{fig:case2}
\end{figure*}

Fig. \ref{fig:case2} shows the total system power, execution time and the variance of IP utilization as $ k$ changes. MLR's performance is not as good as NCA when $ k = 25$, as shown in Fig. \ref{fig:case2:a}. In highly tight capacity conditions, the MLR method will have more extreme situations, such as the last few TDs being covered by APs that are far away. However, as $ k$ continues to increase, the total power obtained by MLR decreases faster. In practical applications, the extreme situation of capacity shortage is a small probability event (if it occurs frequently, it means that additional APs need to be added). MLR shows better performance when there is surplus capacity. From Fig. \ref{fig:case2:b}, we know that the OPT method consumes a lot of time when capacity is tight. In contrast, the execution time of the MLR method and NCA is not related to $ k$. The variance results shown in Fig. \ref{fig:case2:c} indicate that the distribution of TDs tends to become more dense as $ k$ increases. Thus, when r = 40, selecting a disk with a larger radius that covers more TDS can save more power. Section 4.3 discusses the influence of the number of APs on the variance under the condition of constant $ k$.

\subsection{Impact of the number of APs with ${k} = 25$}
In the third experiment, we kept the number of TDs $n$ and the capacity of APs $ k$ unchanged, with $n=100$ and $ k=25$. In this case, as $m$ increases, the total capacity of the entire system also increases. $m$ gradually grows from 4 to 20, and the corresponding total system capacity increases from 100 to 500.

\begin{figure*}[!t]
\centering
\subfigure[Total power] {
\includegraphics[width=1.8in]{./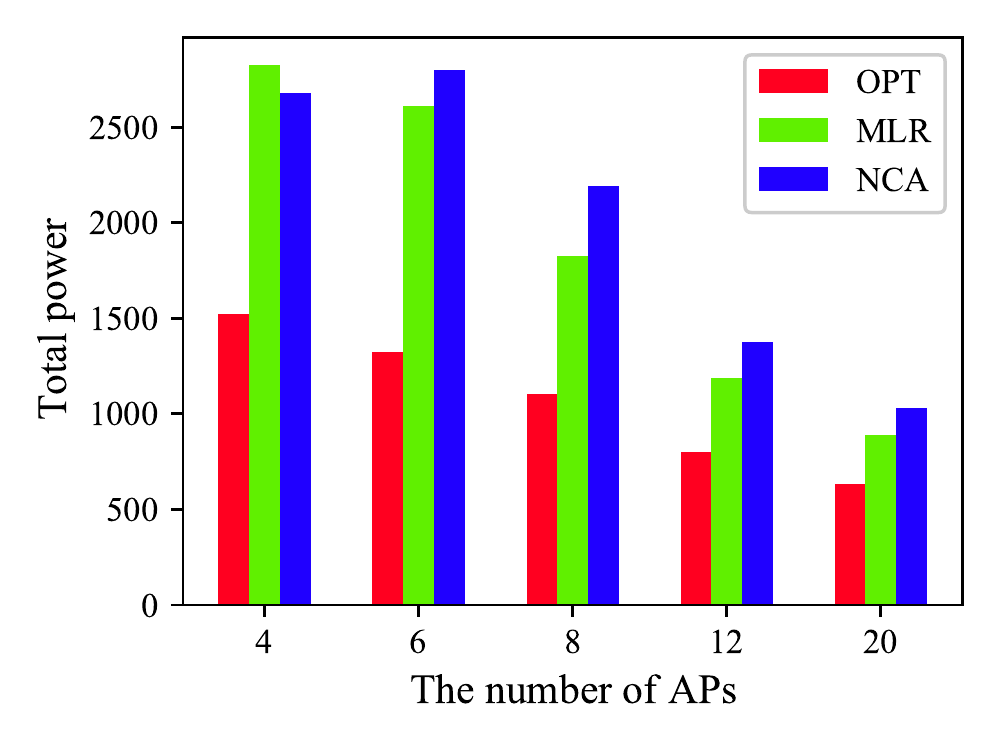}
\label{fig:case3:a}}
\hfil
\subfigure[Approximation ratio] {
\includegraphics[width=1.8in]{./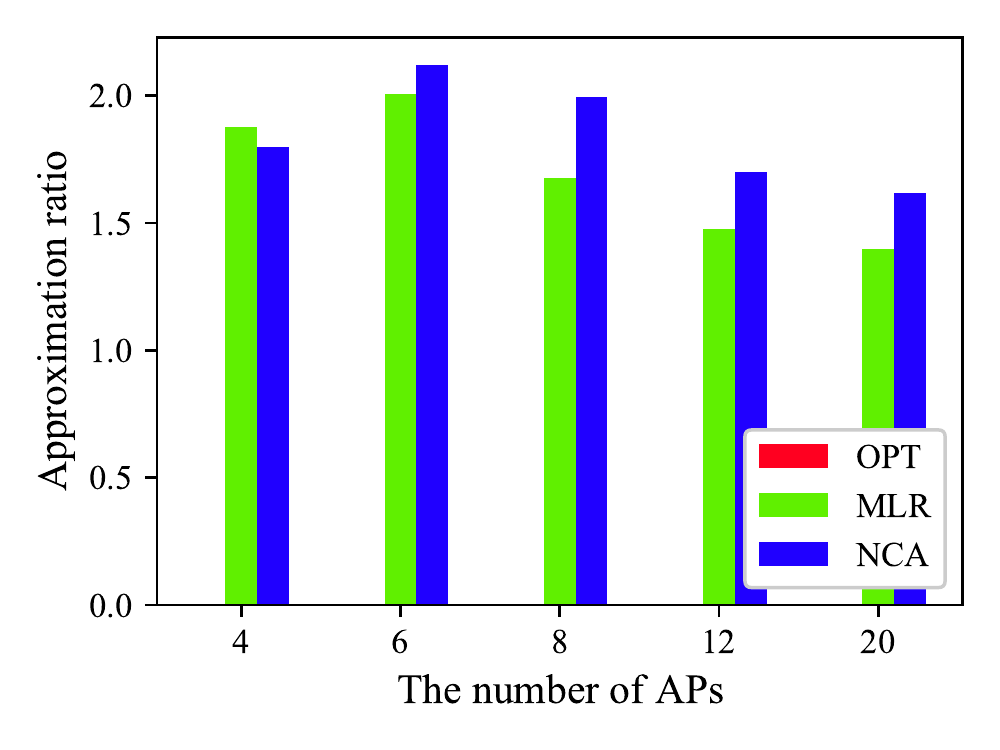}
\label{fig:case3:b}}
\hfil
\subfigure[Variance of IP utilization] {
\includegraphics[width=1.8in]{./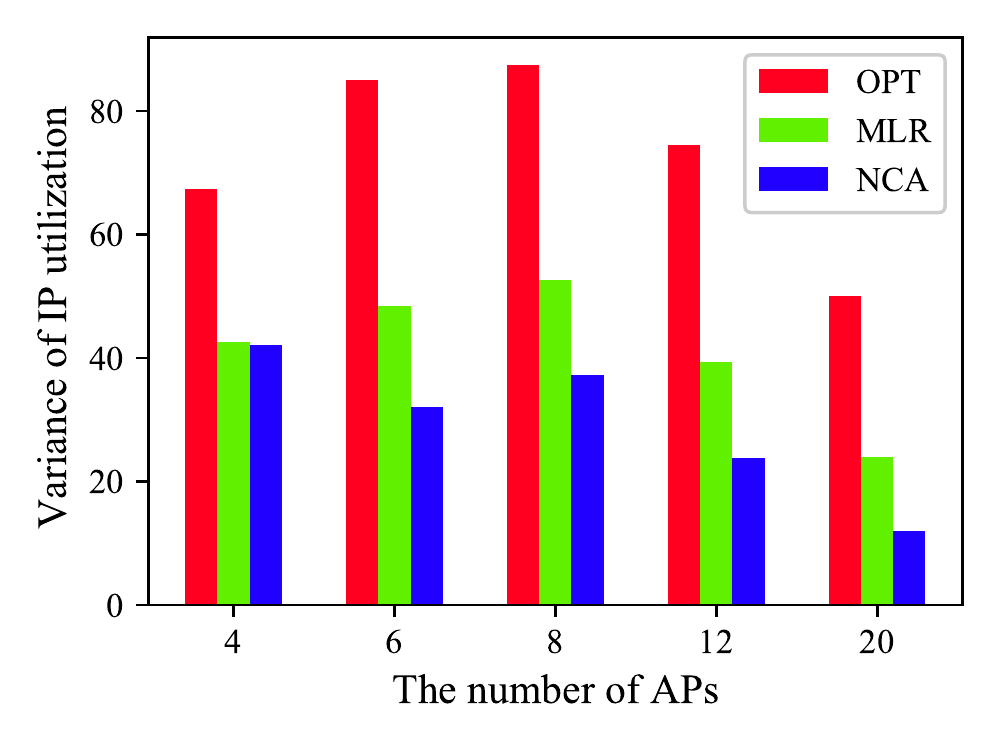}
\label{fig:case3:c}}
\caption{System performance under different numbers of APs (${k}=25$).}
\label{fig:case3}
\end{figure*}

\begin{figure*}[!t]
\centering
\subfigure[Total power] {
\includegraphics[width=1.8in]{./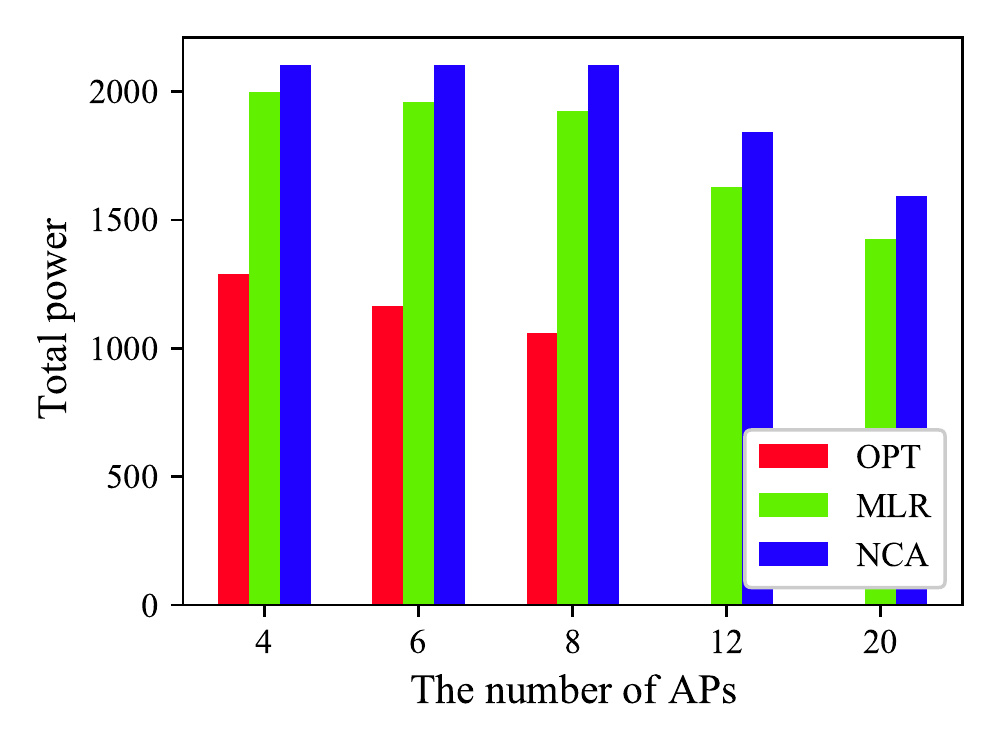}
\label{fig:case4:a}}
\hfil
\subfigure[Execution time] {
\includegraphics[width=1.8in]{./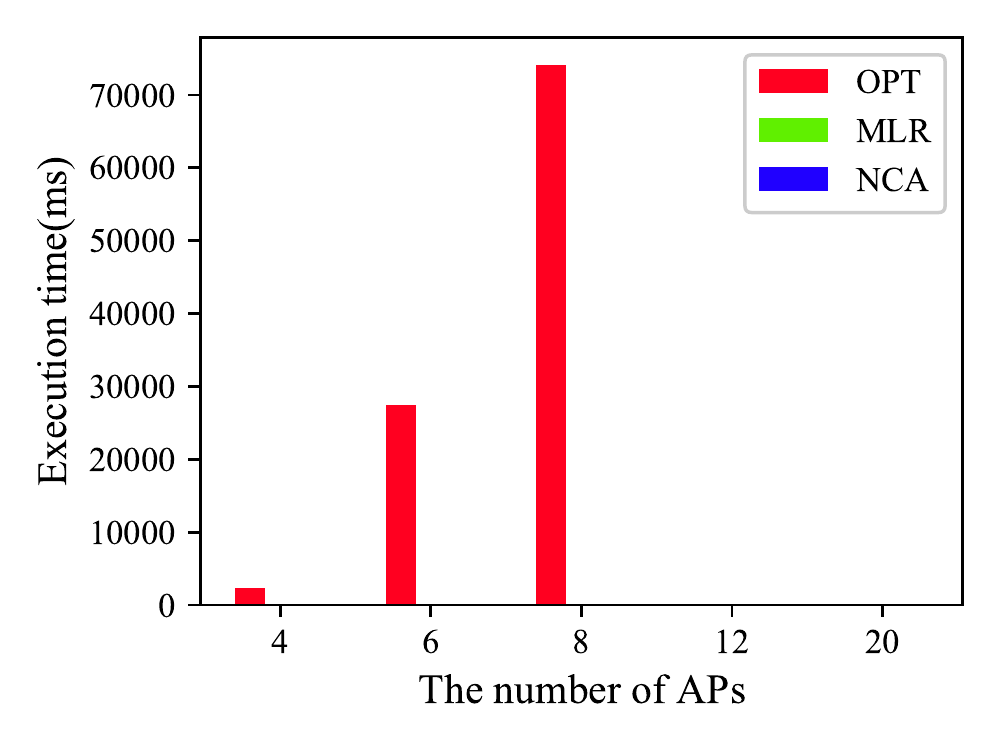}
\label{fig:case4:b}}
\hfil
\subfigure[Variance of IP utilization] {
\includegraphics[width=1.8in]{./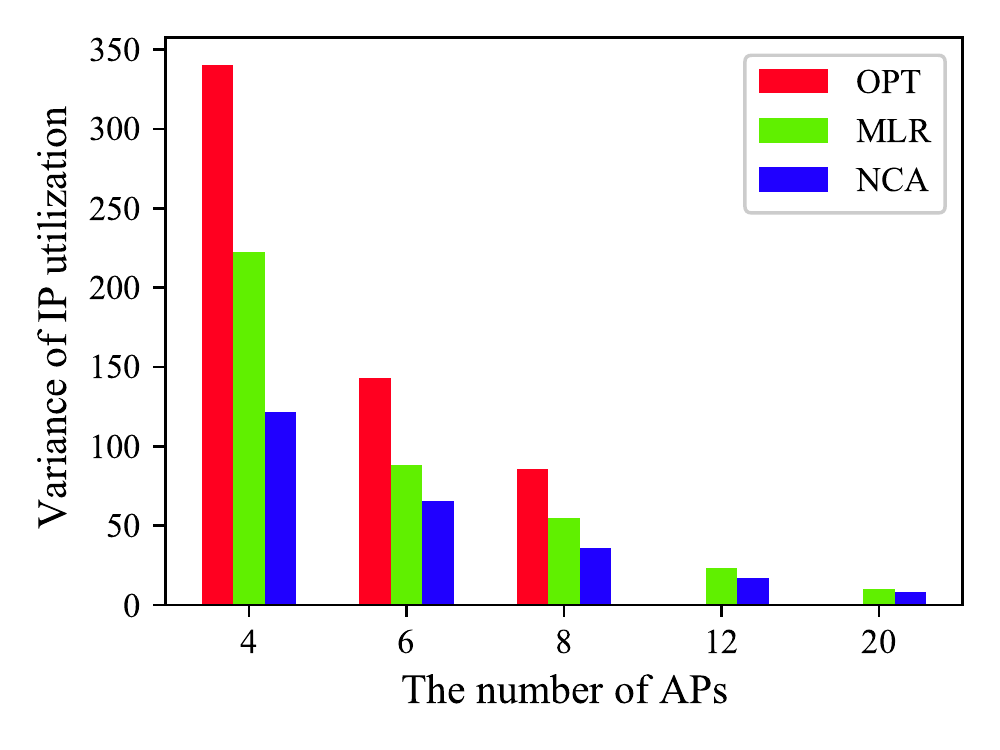}
\label{fig:case4:c}}
\caption{System performance under different numbers of APs ($K=160$).}
\label{fig:case4}
\end{figure*}

When $m=4$, the system capacity is just sufficient to cover all TDs. In this case, as shown in Fig. \ref{fig:case3:a}, the results obtained by MLR and NCP are worse than the optimal solution. Notably, the MLR method is not better than NCP because when capacity is incredibly tight, if some TDs are densely distributed around an AP, the MLR method will fill it more quickly and fall into a local optimum. For example, in Fig. \ref{fig:ins}, after AP3 covers TDs $\{ 8,10,12,14,18\} $, AP2 has to provide more power to cover the TDs $\{ 15,16\}$ far away from it. As the number of APs and the total capacity of the system increase, the total power obtained by the three methods decreases. Fig. \ref{fig:case3:b} shows that the performance improvement of the MLR method is the most obvious. Fig. \ref{fig:case3:c} shows the variance changes of each method. The most obvious is that the coverage schemes of the OPT method are very intensive in all cases. It reaches the maximum value when m=8 (when m=4, n=100, the variance of OPT is still different from that of the other two methods because the total capacity K of the system may be greater than 100).

\subsection{Impact of the number of APs with $K=160$}
In this experiment, we kept the number of TDs $n$ and the total system capacity $K$ constant at 100 and 160, respectively. The number of APs $m$ gradually increases from 4 to 20, and $ k$ changes accordingly. All facilities are randomly and evenly distributed in an area with a side length of 40. When $m>12$, OPT cannot produce results within 10 minutes.

As shown in Fig. \ref{fig:case4:a}, the results of MLR are better than those of NCA. As the number of APs increases, the distribution of APs in a fixed-size area becomes denser. Overall, the average radius of APs will be reduced, and the impact of the decreasing radius on the system is greater than that of the number of APs (more disks). As the number of APs increases and the capacity decreases, the variance of IP utilization among APs becomes more similar, as shown in Fig. \ref{fig:case4:c}. In Fig. \ref{fig:case4:b},  the variation in the number of APs also has a considerable impact on the execution time of the OPT method. On the contrary, the other two have almost no fluctuations.

\section{Conclusion}
% �źŸ��������������л����Ĵ������������Ա����о�����ΪAP����һ�������Ĺ��ʴӶ������������ġ����ǽ��������ⶨ��ΪMPCC���⣬����MPC������һ���������������Ǹ�����MPCC�����������滮��ʽ����������һ��NP-hard���������Ž����ĵ�ʱ����������ģָ������������������������һ������̰�Ĳ��Եķ��������������⡣����ͨ��ʵ�ֿ��Է��֣����ǵķ�����һ�������¶��ܵõ������Ľ�����
Signal coverage consumes considerable energy in wireless networks, so this paper studies how to assign an appropriate power for APs to reduce energy consumption. We consider the signal coverage process in edge networks as a minimum power coverage problem and define this problem as an MPCC problem, which is a special case of MPC problem. Capacity constraints make it more difficult to solve. Modeling this problem in MEC is also a novel exploration. Then, we propose a local-ratio-based approach to solve the MPCC problem. Finally, various experiments show that our method can produce satisfactory results in most cases.

For the MPCC problem, we can also learn from the previous methods of the MPC problem to solve it (such as the primal-dual method). The MPCC problem is more complex but more valuable when considering heterogeneous APs and TDs' natural number requirements (not just 1). This is worthy of further study. In addition, this paper notes in Experiment 4.3 that more APs can decrease the total energy of the system. This does not take into account the cost of the AP itself, which can be considered in future studies.
\section{Acknowledgments}
This work was supported in part by the National Natural Science Foundation of China [Nos. 12071417, 61762091, 62062065], in part by the 13th Postgraduate Innovation Project of Yunnan University [No. 2021Z079] and Scientific research foundation of Yunnan Provincial Department of Education [No. 2022J0002].

\bibliographystyle{ACM-Reference-Format}
\bibliography{ref}{}% common bib file
%%
%% If your work has an appendix, this is the place to put it.
\appendix

\end{document}